\newtheorem{theorem}{Theorem}
\newcommand{\be}{\begin{equation}}
\newcommand{\ee}{\end{equation}}
\newcommand{\bea}{\begin{eqnarray}}
\newcommand{\eea}{\end{eqnarray}}
\newcommand{\ba}{\begin{array}}
\newcommand{\ea}{\end{array}}
\newcommand{\noi}{\noindent}
\newcommand{\ra}{\rangle}
\begin{document}

\title{\bf Enhanced energy transfer in a Dicke quantum battery}
\author{X. Zhang} \email{x.zhang-8@tudelft.nl}
\author{M. Blaauboer} \email{m.blaauboer@tudelft.nl}
\affiliation{Kavli Institute of Nanoscience, Delft University of Technology, Lorentzweg 1. 2628 CJ Delft, The Netherlands}

\date{\today}

\begin{abstract}
We theoretically investigate the enhancement of the charging power in a Dicke quantum battery which consists of an array of $N$ two-level systems (TLS) coupled to a single mode of cavity photons. In the limit of small $N$, we analytically solve the time evolution for the full charging process. The eigenvectors of the driving Hamiltonian are found to be pseudo-Hermite polynomials and the evolution is thus interpreted as harmonic oscillator like behaviour. We find that there exists a universal flip duration in this process, regardless to the number of TLSs inside the cavity. Then we demonstrate that the average charging power when using a collective protocol is $\sqrt{N}$ times larger than the parallel charging protocol as for transferring the same amount of energy. Unlike previous studies, we point out that such quantum advantage does not originate from entanglement but dues to the coherent cooperative interactions among the TLSs. Our results provide intuitive quantitative insight into the dynamic charging process of a Dicke battery and can be observed under realistic experimental conditions.
\end{abstract}

\pacs{}
\maketitle

\section{Introduction}
\label{sec:introduction}
\noi Batteries have become ubiquitous in modern technology, supplying power to devices as small as nano-robots and as large as automotive engines. However, the continuing miniaturization technology gradually pushes those traditional batteries into the atomic limit, i.e. the quantum world. This trend, rather than bringing us into an uncontrollable regime, offers the possibility of utilizing quantum properties for investigating and developing more efficient energy manipulations. 

The emerging field of quantum batteries, started by Alicki and Fannes~\cite{Fannes2013}, is aimed at searching for adequate protocols based on quantum coherence and entanglement for more efficient charging-discharging energy transfer. In general, a quantum battery is a system possessing discrete energy levels and interacting with external driving and consumption sources in a controllable fashion. The internal energy of a quantum battery is defined as $\mathrm{tr}[\rho H_B]$, with $\rho$ the density matrix describing the state of the battery and $H_B$ the battery Hamiltonian (see section~\ref{sec:model_desp}). Charging (discharging) of a quantum battery means evolving into a higher (lower) energetic state $\rho'$ by cyclic unitary operations. From the viewpoint that information is a form of energy, research on quantum batteries intrinsically involves using the notions and techniques of quantum information~\cite{John2016}. Questions like whether entanglement plays a role in speeding up the energy transfer and how entropy (and related concepts) evolve in specific battery systems are under active research.~\cite{Giovannetti2003,Borras,Eduardo,Henao2018,Andolina1,Andolina2,zhang2018}

As an answer to these questions, Binder~\emph{et al.} suggested that energy can be deposited into an array of $N$ work qubits with speed-up in charging time $T$ such that $T_\mathrm{global}=T_\mathrm{par}/N$ for the use of a globally entangling charging Hamiltonian compared to a parallel individual protocol~\cite{Binder2015}. In consequence, the average charging power defined by $\expval{P}=\mathrm{tr}[(\rho'-\rho)H_B]/T$ is $N$ fold stronger for the entangling charging protocol compared to the parallel one. However, such a global entangling operation involves highly non-local interactions, which might be difficult to realize in practice. Le \emph{et al.} therefore designed a practical model consisting of a solid state spin-chain driven under experimentally available resources such as electron spin resonance and exchange interactions~\cite{Thao2018}. They predict that in the strong coupling regime the time-averaged charging power for an entangling protocol is actually worse than the individual charging. While the instantaneous charging power could be large, the total amount of energy stored in a spin chain is negligibly small. This conflicting scenario leads us to investigate if the enhancement of charging can always be attributed to the shortened passage through the entangled subspace~\cite{Thao2018}.

Another practical setting for global charging is based on the Dicke model~\cite{Dicke1954}, which describes an array of two-level systems (TLS) enclosed in a photon cavity whose frequency is on resonance with the Zeeman splitting of those TLSs. A recent paper shows that in the thermodynamic limit (i.e. the number of TLS $N\gg 1$) quantum enhancement of charging power is proportional to $\sqrt{N}$ in the normal phase and $N$ in the superradiant phase~\cite{Ferraro2018,Emary2003}. It is argued in Ref.~\onlinecite{Ferraro2018} that the cause of such enhancement is the entanglement between TLSs, which is induced by the sharing of photons in the cavity. However, the conclusion from Ref.~\onlinecite{Wolfe2014} suggests that there is actually no entanglement generated in the Dicke superradiant phase. Based on this conclusion and the conjecture of Ref.~\onlinecite{Thao2018} that entanglement may not be the only cause of quantum speed-up, we investigate in this paper the question whether for other limits of a Dicke quantum battery there is also a speed-up effect in charging and whether entanglement plays a role in this.

In particular, we analytically prove that in the opposite limit, when the number of TLS is much smaller than the number of cavity photons ($N\ll n$), there is also an enhancement of charging-discharging efficiency. As for analytic calculations we limit the Dicke Hamiltonian to a range that the coupling strength is smaller than the Zeeman splitting of the TLS (also named spins in this paper). By using the rotating wave approximation (RWA) we calculate the time $\tau$ required to charge $N$ spins from the ground state to the excited state. We show that $\tau=\pi/2g\sqrt{n}$, independent to the number of spins $N$. Given this universal flipping duration we argue that the power for a collective charging protocol is $\sqrt{N}$ times larger than for an individual charging one. Quite contrary to previous studies~\cite{Fannes2013,Binder2015,Ferraro2018}, there is no entanglement created during such a process. By solving the von Neumann equation, we clearly see that the source of speed-up comes from the coherent but non-entangling cooperative interaction among the spins.

The remaining part of this paper is organized as follows. In Sec.~\ref{sec:model_desp} we provide the definitions of fundamental concepts, charging protocols, and specify the Hamiltonian for our Dicke battery. Next in Sec.~\ref{sec:charging} we present the key part of this paper: the mathematical proof that there exists a universal spin flip duration for collective charging protocols. Based on this result we then study the boost of the energy transfer efficiency for our Dicke quantum battery. Sec.~\ref{sec:reason} is devoted to investigating the origin of such enhancement and comparing our model with other studies. We provide concluding remarks in Sec.~\ref{sec:conclusion}.

\section{Model description}
\label{sec:model_desp}
\begin{figure}
    \begin{centering}
    \includegraphics[width=\columnwidth]{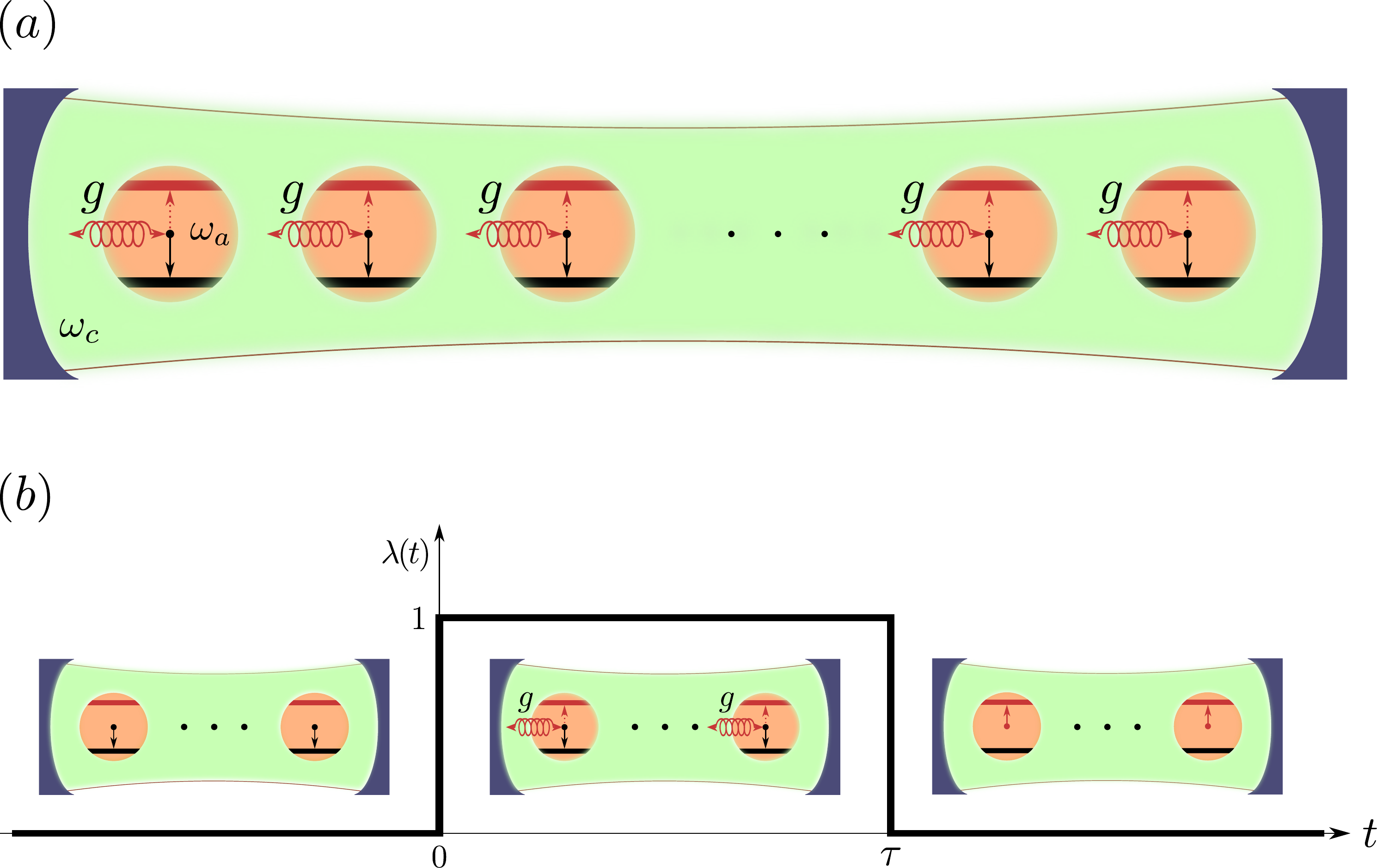}
    \par\end{centering}
    \caption{(Color online) (a) Schematic representation of a Dicke quantum battery as an array of identical two level systems with energy splitting $\hbar\omega_a$. The ground state (black bar) and excited state (red bar) are equivalent to the states of spin down and up (see arrows). The batteries are charged inside a single cavity (green background) of photonic mode $\omega_c$ and $g$ is the coupling constant among TLSs and cavity photons. (b) Initially the coupling signal $\lambda$ is set to be $0$ and the batteries are prepared in the ground state. The coupling is switched on at time $t=0^+$ and maintained as constant $g$ for a period of $\tau$. Then we turn off the coupling and the charging process therefore stops. Energy is transferred from cavity photons into the batteries who final states are expected to be fully charged (all spins up).}
    \label{fig:charging}
\end{figure}

The model of a quantum battery, shown in Fig.~\ref{fig:charging}, is an array of two level systems enclosed in a photon cavity. Since there is an equivalence between the ground/excited state of two level system to the spin down/up state of Zeeman splitting, we will refer to these TLSs as spins in the following. Without loss of generality, we set the spin down $\ket{\downarrow}$ to be the ground state and initialize the battery system into the state of all spins down $\ket{\psi_0}=\ket{\downarrow,\downarrow,\dots,\downarrow}$. When applying a magnetic filed $B_z$ the internal Hamiltonian of the battery system reads:
\begin{equation}
    H_B=g^*\mu_B B_z S_z
\label{eq:batteryHam}
\end{equation}
where $S_z=\sum_{j=1}^N \sigma_z^{(j)}$, the index $j$ refers to the $j$th spin inside the cavity and $\sigma_z$ denotes the Pauli spin operator in the $z$-direction. $S_z$ is the total spin operator, counting the Zeeman splitting for all the spins. Working in units of $\hbar=1$, the battery Hamiltonian (\ref{eq:batteryHam}) can be simplified as:
\begin{equation}
    H_B=\omega_a S_z
\end{equation}
where the frequency $\omega_a$ can be tuned by changing the external magnetic field $B_z$. The initial energy for such a $N$-spin battery then corresponds to be $E_0=\matrixel{\psi_0}{H_B}{\psi_0}=-\frac{N}{2}\omega_a$, and the energy stored in the battery is given by:
\begin{equation}
    W(t)=\matrixel{\psi_t}{H_B}{\psi_t}-\matrixel{\psi_0}{H_B}{\psi_0}.
\end{equation}
In this paper, we focus on the time required to flip all spins down $\ket{\psi_0}=\ket{\downarrow,\downarrow,\dots,\downarrow}$ to all spins up $\ket{\psi_\tau}=\ket{\uparrow,\uparrow,\dots,\uparrow}$. So the energy stored in the battery by this process is expected to be:
\begin{equation}
    W(\tau)=\matrixel{\psi_\tau}{H_B}{\psi_\tau}-\matrixel{\psi_0}{H_B}{\psi_0}=N\omega_a
    \label{eqn:energy}
\end{equation}
and the average charging power is:
\begin{equation}
    P(\tau)=\frac{W(\tau)}{\tau}.
    \label{eqn:power}
\end{equation}
The cavity, as the charger, is set to stay in a single mode of the quantized electromagnetic field. Its internal Hamiltonian reads $H_c=\omega_c a^\dagger a$, with $\omega_c$ the photon frequency and $a^\dagger, a$ the creation and annihilation operators of cavity photons. We assume the cavity to be completely isolated, i.e. there is no leakage of photons. Moreover, we assume that by tuning the external magnetic field $B_z$, the Zeeman splitting of the spins is on resonance with the cavity photons, i.e. $\omega_c=\omega_a$. Originating from magnetic dipole interactions, the coupling between the cavity photons and spins is modeled as the Dicke interaction $H_\mathrm{int}=g(a^\dagger+a)(S_+ + S_-)$, where $S_\pm=\sum_{j=1}^N \sigma_\pm^{(j)}$ are the (summed) raising and lowering operators and $g$ is the coupling constant. The full Hamiltonian describing the charger-battery system is now given by:
\begin{equation}
    H_\mathrm{Dicke}(t) =\omega_c a^\dagger a+\omega_a S_z+\lambda(t)g(a^\dagger+a)(S_++S_-).
    \label{eqn:dicke}
\end{equation}
Here $\lambda(t)$ is a time-dependent coupling signal set to be $1$ during the charging period $[0,\tau]$ and $0$ otherwise, as illustrated in Fig.~\ref{fig:charging}. Before $t=0$ there is no coupling between photons and spins, and the cavity is assumed to stay in a $n$-photon Fock state $\ket{n}$. The initial state of total system $|\psi_0\ra$ then reads:
\begin{equation}
    \ket{\Psi_0}=\ket{\downarrow,\downarrow,\dots,\downarrow}\otimes\ket{n}
    \label{eqn:initial}
\end{equation}
At $t=0^+$ the coupling is turned on. Driven by the Dicke Hamiltonian (\ref{eqn:dicke}) energy starts to be transferred from cavity photons to battery spins. The coupling is kept constant during this charging period $t\in[0,\tau]$ and shall be switched off at $t=\tau$. The quantum state of the total system will then be static again with the battery spins expected to be fully flipped up.

In typical experiment settings, the coupling constant $g$ is much smaller than cavity energy $\omega_c$ so that the Dicke interaction can be simplified using the rotating wave approximation. Thus resulting in the Tavis-Cumming Hamiltonian as from Ref.~\onlinecite{Soykal2010_letter}:
\begin{equation}
    H=\omega_c a^\dagger a+\omega_c S_z+g(S_+ a+S_- a^\dagger)
    \label{eqn:charging}
\end{equation} 
Based on this Hamiltonian and the initial state~(\ref{eqn:initial}) we can calculate the time $\tau$ used to flip spins from the down to the up state. But before that we point out four prominent properties of this charging protocol. i) When the photon frequency is on resonance with Zeeman energy, we have $[H,\,a^\dagger a+S_z]=0$, i.e. the ``excitation number" is conserved during the evolution. For example if we have 3 spins inside the cavity, the initial state can be denoted as $\ket{\frac{3}{2},-\frac{3}{2},n}$ with $J=\frac{3}{2}$ and $M=-\frac{3}{2}$ standing for the state of three spins down and $n$ the initial photon number. During the evolution this charger-battery system can only evolve into states $\ket{\frac{3}{2},-\frac{1}{2},n-1}$, $\ket{\frac{3}{2},\frac{1}{2},n-2}$, and $\ket{\frac{3}{2},\frac{3}{2},n-3}$, keeping the excitation number $n-\frac{3}{2}$ conserved. ii) The coupling term in Eq.~(\ref{eqn:charging}) commutes with the remaining part of Hamiltonian $H$, meaning that there is no thermodynamic work cost for switching on and off the coupling~\cite{Andolina1}. iii) The Tavis-Cummings Hamiltonian is not exactly solvable except for two limiting cases. The first one is when the number of spins $N$ approaches thermodynamic limit. In this situation one can transform the collective spin operators using Holstein-Primakoff transformation which leads to a simplified Hamiltonian of quadratic form~\cite{Primakoff1940}. A second case is for the number of spins $N$ much less than the number of photons, $N\ll n$. As studied in this paper, the second case will result in solving a symmetric tridiagonal matrix and this can be done analytically. iv) The initial Fock state is relatively difficult to prepare in practice. But according to the conclusion of Ref.~\onlinecite{Andolina1}, the same efficiency of energy transfer can be achieved by replacing the Fock state with a coherent state of the same energy.

\section{Collective Charging of a Dicke Quantum Battery}
\label{sec:charging}
\subsection{Matrix representation}
In order to calculate the time of flipping spins from down to up, we shall first derive the matrix representation of the charging Hamiltonian~(\ref{eqn:charging}). Due to the conservation of excitation number, the quantum states of such charger-battery system flipping from $\ket{\frac{N}{2},-\frac{N}{2},n}$ to $\ket{\frac{N}{2},\frac{N}{2},n-N}$ can only evolve within the subspace in which the total spin momentum ($J=\frac{N}{2}$) is preserved. Thus the matrix representation of Hamiltonian is limited to dimension $N+1$ instead of $2^N$. Therefore we can identify those quantum states by the ($N+1$)-dimensional basis vectors, such as:
\begin{align}
    \begin{split}
    &\ket{\frac{N}{2},-\frac{N}{2},n}=\begin{pmatrix}0&\cdots&0&1\end{pmatrix}^T\\
    &\ket{\frac{N}{2},-\frac{N}{2}+1,n-1}=\begin{pmatrix}0&\cdots&1&0\end{pmatrix}^T\\
    &\qquad\vdots\\
    &\ket{\frac{N}{2},\frac{N}{2},n-N}=\begin{pmatrix}1&\cdots&0&0\end{pmatrix}^T
    \end{split}
\end{align}
With this identification, the representation of such a $3$-spins battery is $4\times 4$ matrix:
\begin{equation*}
    H=\begin{pmatrix}
    (n-\frac{3}{2}) & g\sqrt{3(n-3)} & 0 & 0\\
    g\sqrt{3(n-3)} & (n-\frac{3}{2}) & g\,2\sqrt{n-1} & 0\\
    0 & g\,2\sqrt{n-1} & (n-\frac{3}{2}) & g\sqrt{3n}\\
    0 & 0 & g\sqrt{3n} & (n-\frac{3}{2}).
    \end{pmatrix}
\end{equation*}
In practice the number of photons $n$ is usually much larger than $1$, so we can factorize this matrix into two parts:
\begin{equation*}
\small
    H=(n-\frac{3}{2})\mathbb{1}+g\sqrt{n}\begin{pmatrix}0&\sqrt{3}\sqrt{1}&0&0\\\sqrt{3}\sqrt{1}&0&\sqrt{2}\sqrt{2}&0\\0&\sqrt{2}\sqrt{2}&0&\sqrt{1}\sqrt{3}\\0&0&\sqrt{1}\sqrt{3}&0\end{pmatrix}.
\end{equation*}
Actually it is a common feature that in the limit of large photon number, we can always approximate the matrix representation of the charging Hamiltonian by the form $H=\mathrm{\#Ex}\,\mathbb{1}+\tilde{H}$, with $\mathrm{\#Ex}$ the excitation number and
\begin{equation}
    \tilde{H}=g\sqrt{n}
    \begin{pmatrix}
    0 & b_1 \\
    b_1 & 0 & b_2 \\
    & b_2 & 0 & b_3\\
    & & \ddots & \ddots & \ddots\\
    & & & b_{N-1} & 0 & b_N \\
    & & & & b_N & 0
    \end{pmatrix}
    \label{eqn:hamiltonian}
\end{equation}
with each element $b_{k}=\sqrt{N-k+1}\sqrt{k}$ and $k=1,2,\dots,N$. Because the identity matrix commutes with all other operators, the first term of $H$ only adds a common phase factor to the quantum state which does not influence the spin-flip duration. So we can ignore the first term and only include the second part of $H$ (i.e. the symmetric tridiagonal matrix) in the calculation of propagator $U(t)=e^{-i\tilde{H}t}$.

Given the matrix form of $\tilde{H}$ we are now ready to determine analytically the flip duration $\tau$ for fully charging the spins:
\begin{equation}
    \begin{pmatrix}1\\0\\\vdots\\0\end{pmatrix}\xlongequal{\tau?}e^{-i\tilde{H}\tau}\begin{pmatrix}0\\0\\\vdots\\1\end{pmatrix}.
    \label{eqn:flipping}
\end{equation}
However, direct substitution of the matrix expression [Eq.~(\ref{eqn:hamiltonian})] into the propagator $e^{-i\tilde{H}t}$ results in a cumbersome expression. In practice, we therefore first diagonalize this Hamiltonian such that $\tilde{H}=VDV^\dagger$, with $D$ a diagonal matrix whose elements represent the eigenvalues of $\tilde{H}$, and $V$ a unitary matrix in which each column stands for the corresponding eigenvectors. After that we can write the propagator as $U(t)=Ve^{-iDt}V^\dagger$.

\subsection{General form of eigenvalues and eigenvectors of the Tavis-Cumming Hamiltonian}
The calculation of eigenvalues and eigenvectors of matrix $\tilde{H}$~(\ref{eqn:hamiltonian}) is straightforward but involves rather technical expressions. Here we therefore only present the final results. The eigenvalues of $\Tilde{H}$ are:
\begin{equation}
    D=g\sqrt{n}\begin{pmatrix}N&&&&\\&N-2&&&\\&&\ddots&&\\&&&-N+2\\&&&&-N\end{pmatrix}.
    \label{eqn:eigenvalue}
\end{equation}
Disregarding the common factor $g\sqrt{n}$, these eigenvalues form a sequence from $N$ to $-N$ with a consecutive difference of $-2$. These results coincide with the numerical analysis of the Rabi splitting for Tavis-Cumming Hamiltonian in the large photon limit~\cite{Chiorescu2010}. The eigenvectors corresponding to eigenvalues $g\sqrt{n}(N-2k)\;(k=0,1,\dots,N)$ are:
\begin{equation}
    V^k=\frac{1}{2^N}\frac{1}{k!\,\sqrt{\binom{N}{k}}}\begin{pmatrix}x_0^k\\x_1^k\\\vdots\\x_N^k\end{pmatrix},
    \label{eqn:eigenvector}
\end{equation}
with each of the column entries $x_\xi^k\;(\xi=0,1,\dots,N)$ given by:
\begin{equation}
    x_\xi^k=\sqrt{\binom{N}{\xi}}P_k(\xi).
    \label{eqn:eigen_x}
\end{equation}
The characteristic polynomials $P_k(\xi)$ obey the recursion relation:
\begin{align}
\begin{split}
    &P_0(\xi)=1\\
    &P_1(\xi)=N-2\xi\\
    &P_k(\xi)=[N-2\xi]P_{k-1}-(k-1)(N-k+2)P_{k-2}.
    \label{eqn:recursion}
\end{split}
\end{align}
They are orthogonal to each other and alternating between even and odd parity. Actually, as the number of spins $N$ approaches the thermodynamic limit, these polynomials converge into the Hermite polynomials. Because $\xi=0,1,\dots,N$, we have $N-2\xi=N,N-1,\dots,-N$. If we let $N-2\xi=Nx$, then $x$ goes from $1$ to $-1$ in discrete steps and the recursion relation (\ref{eqn:recursion}) reads:
\begin{align}
\begin{split}
    &P_0(x)=1\\
    &P_1(x)=Nx\\
    &P_k(x)=NxP_{k-1}-N(k-1)P_{k-2}
    \label{eqn:pseudo}
\end{split}
\end{align}
with corresponding Rodrigues' formula:
\begin{equation}
    P_k(x)=(-1)^ke^\frac{Nx^2}{2}\frac{d^k}{dx^k}e^{-\frac{Nx^2}{2}}.
\end{equation}
Eq.~(\ref{eqn:pseudo}) is just the scaled version of the standard Hermite polynomials whose Rodrigues' formula is $H_k(x)=(-1)^ke^{x^2}\frac{d^k}{dx^k}e^{-x^2}$. This phenomenon can be well understood because Hermite polynomials represent the eigenstates of a quantum harmonic oscillator (QHO)~\cite{Shankar}. For an array with finite number of spins, however, the spectrum resembles that of a pseudo-harmonic oscillator with equally spaced eigenvalues and corresponding eigenvectors [Eq.~(\ref{eqn:eigenvector})] of the pseudo-Hermite form. 
One should notice that for QHO the coordinate $x$ in eigenvectors takes the form of a continuous variable in real space, while the coordinate $x$ in expression~(\ref{eqn:pseudo}) takes the form as discrete variables in the ($N+1$)-dimensional spin space. Thus it could be understood that if the number of spins stays finite, the battery behaves like a pseudo-harmonic oscillator swinging in quantized spin-momentum space.

Another feature of this recursion formula is that the second factor $(k-1)(N-k+2)$ for polynomial $P_{k}(\xi)$ exactly equals the square of the entry in $k$th row of matrix $\tilde{H}$, keeping the sum $(k-1)+(N-k+2)=N+1$ equals to the dimension of total spin space.

\subsection{The universal flip duration}
Equipped with the eigenvalues $D$~(\ref{eqn:eigenvalue}) and eigenvectors $V$~(\ref{eqn:eigenvector}), we can now replace the propagator $e^{-i\tilde{H}\tau}$ in Eq.~(\ref{eqn:flipping}) with $Ve^{-iD\tau}V^\dagger$. The time required for fully charging $N$ spins from the ground state to the excited state is summarized as theorem~\ref{thm:universal}.
\begin{theorem}
In the limit of $N/n\ll 1$, the time to flip all spins from down to up equals $\tau=\frac{\pi}{2g\sqrt{n}}$, independent to the number of spins $N$.
\label{thm:universal}
\end{theorem}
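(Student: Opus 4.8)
The plan is to reduce the whole flip condition (\ref{eqn:flipping}) to a single entry of the propagator. Working with $\tilde H$ in the form (\ref{eqn:hamiltonian}), which already incorporates the regime $N/n\ll1$, I drop the common shift $(n-\tfrac{N}{2})\mathbb{1}$ (a harmless global phase) and substitute the diagonalization $\tilde H=VDV^\dagger$, so $U(\tau)=Ve^{-iD\tau}V^\dagger$. The all-down state is the basis vector $e_N=(0,\dots,0,1)^T$ and the all-up state is $e_0=(1,0,\dots,0)^T$, so the flip is governed entirely by the top-right matrix element. Since the eigenvectors (\ref{eqn:eigenvector}) are real, I would write
\begin{equation}
 [U(\tau)]_{0,N}=\sum_{k=0}^{N}V^k_0\,V^k_N\,e^{-ig\sqrt n\,(N-2k)\,\tau},
\label{eqn:plan-sum}
\end{equation}
and the goal becomes showing that this quantity has unit modulus exactly at $\tau=\pi/2g\sqrt n$.

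The key step exploits two structural facts already established above. First, the equally spaced spectrum (\ref{eqn:eigenvalue}) means that at $\tau=\pi/2g\sqrt n$ the phase collapses to $e^{-ig\sqrt n(N-2k)\tau}=e^{-iN\pi/2}(-1)^k$, an overall constant times a sign alternating with $k$. Second, the eigenvectors carry definite parity: because $P_k$ is even/odd in the centred variable $x=(N-2\xi)/N$, and the endpoints $\xi=0$ and $\xi=N$ correspond to $x=+1$ and $x=-1$ (i.e. are related by $x\to-x$), one has $P_k(N)=(-1)^kP_k(0)$; since the binomial weights $\sqrt{\binom{N}{\xi}}$ at $\xi=0,N$ are both unity, this yields $V^k_N=(-1)^kV^k_0$. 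Inserting both facts into (\ref{eqn:plan-sum}) makes the two sign factors combine as $(-1)^{2k}=1$ for \emph{every} $k$, so that
\begin{equation}
 [U(\tau)]_{0,N}=e^{-iN\pi/2}\sum_{k=0}^N\big(V^k_0\big)^2=e^{-iN\pi/2},
\end{equation}
the last equality being row-orthonormality of the unitary $V$ (the $\xi=\xi'=0$ case of $\sum_k V^k_\xi V^k_{\xi'}=\delta_{\xi\xi'}$).

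To finish I would invoke unitarity of $U(\tau)$: its $N$th column is a unit vector whose zeroth entry already has modulus one, which forces every other entry to vanish. Hence $U(\tau)\,e_N=e^{-iN\pi/2}\,e_0$, i.e. the battery is driven exactly from all-down to all-up, up to an irrelevant global phase, at the $N$-independent time $\tau=\pi/2g\sqrt n$. The main obstacle is not the algebra but spotting the mechanism: the flip succeeds because the alternating time-evolution phase from the uniform Rabi ladder precisely matches the parity signature of the pseudo-Hermite eigenvectors, so all $N+1$ amplitudes re-cohere onto the target state at the same instant. As a consistency check one can note that $\tilde H=g\sqrt n\,(S_++S_-)=2g\sqrt n\,S_x$ with $S_x\equiv(S_++S_-)/2$ inside the $J=N/2$ multiplet, whence $U(t)=e^{-2ig\sqrt n\,t\,S_x}$ is simply a rotation about $\hat x$, and $\tau$ is the $\pi$-pulse condition $2g\sqrt n\,\tau=\pi$ that maps $S_z\to-S_z$.
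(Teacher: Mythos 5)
Your proof is correct, and it reaches the paper's conclusion by a genuinely different and more structural route. The paper's proof is computational: it substitutes the eigenvalues (\ref{eqn:eigenvalue}) and eigenvectors (\ref{eqn:eigenvector}) into the propagator, splits into odd and even $N$, and reduces the flip condition to the trigonometric sums (\ref{eqn:algebraic_odd}) and (\ref{eqn:algebraic_even}), each verified term by term at $\tau=\pi/2g\sqrt{n}$ using the half-range binomial identity. You instead isolate the two facts that make those sums collapse: the equally spaced spectrum turns every phase into $e^{-iN\pi/2}(-1)^k$, and the alternating parity of the pseudo-Hermite eigenvectors gives $V^k_N=(-1)^kV^k_0$, so the signs cancel for \emph{every} $k$; row-orthonormality of $V$ then gives unit modulus and unitarity of $U(\tau)$ kills the remaining entries, with no even/odd case distinction. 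Indeed, the paper's binomial sums are precisely your normalization identity $\sum_k(V^k_0)^2=1$ in disguise, since $(V^k_0)^2=\binom{N}{k}/2^N$ and the pairing $k\leftrightarrow N-k$ reproduces the sine/cosine structure. Two remarks. First, the parity claim deserves one explicit line: it follows by induction on the exact recursion (\ref{eqn:recursion}), because under $\xi\to N-\xi$ the coefficient $N-2\xi$ of $P_{k-1}$ flips sign while the coefficient $(k-1)(N-k+2)$ of $P_{k-2}$ is $\xi$-independent; the paper asserts this parity, so citing it is also legitimate. Second, your closing ``consistency check'' is actually the strongest version of the argument: within the $J=N/2$ multiplet the matrix (\ref{eqn:hamiltonian}) --- the paper's large-$n$ approximation --- is exactly $2g\sqrt{n}\,S_x$, so $U(t)$ is a collective rotation about $\hat{x}$ and $\tau=\pi/2g\sqrt{n}$ is nothing but the $\pi$-pulse condition mapping $\ket{J,-J}$ to $\ket{J,J}$ up to phase. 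That one-line observation makes the $N$-independence manifest, which the paper's term-by-term verification obscures. A minor caveat: the prefactor $1/2^N$ in (\ref{eqn:eigenvector}) does not actually normalize the columns (already at $k=0$ the norm comes out $2^{-N/2}$, suggesting the paper means $2^{-N/2}$); your appeal to unitarity of $V$ quietly assumes the correct normalization, which is the robust way to argue and insulates your proof from that typo.
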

\begin{proof}
Substituting the expression of $D$ and $V$ one finds the matrix equation:
\begin{equation}
    \begin{pmatrix}1\\0\\\vdots\\0\end{pmatrix}\xlongequal{\tau?}Ve^{-iD\tau}V^\dagger\begin{pmatrix}0\\0\\\vdots\\1\end{pmatrix}.
    \label{eqn:flipping2}
\end{equation}
can be rephrased into two algebraic equations with even and odd number of spins respectively:

\textbf{Case I}: For odd numbers of spins $N=2m+1$, the algebraic equation corresponding to Eq.~(\ref{eqn:flipping2}) is
\begin{equation}
    \sum_{k=0}^m \frac{2\,\binom{2m+1}{k}}{2^{2m+1}}(-1)^k\sin{(2m+1-2k)g\sqrt{n}\tau}\xlongequal{\tau?}\pm 1.
    \label{eqn:algebraic_odd}
\end{equation}
One can see that if $\tau=\frac{\pi}{2g\sqrt{n}}$, then
\begin{equation*}
(-1)^k\sin{(2m+1-2k)g\sqrt{n}\tau}=(-1)^m
\end{equation*}
holds such that the left-hand side of (\ref{eqn:algebraic_odd}) becomes:
\begin{equation*}
    (-1)^m\sum_{k=0}^m \frac{2\,\binom{2m+1}{k}}{2^{2m+1}}=(-1)^m=\pm 1.
\end{equation*}
So in case that $N$ is odd, the flip duration $\tau=\frac{\pi}{2g\sqrt{n}}$.

\textbf{Case II}: For even number of spins $N=2m$, the algebraic equation corresponding to Eq.~(\ref{eqn:flipping2}) is given by
\begin{equation}
    \sum_{k=0}^{m-1}\frac{2\,\binom{2m}{k}}{2^{2m}}(-1)^k\cos{(2m-2k)g\sqrt{n}\tau}+(-1)^m\frac{\binom{2m}{m}}{2^{2m}}\xlongequal{\tau?}\pm 1.
    \label{eqn:algebraic_even}
\end{equation}
Similarly as the odd-spin case above, for $\tau=\frac{\pi}{2g\sqrt{n}}$
\begin{equation*}
(-1)^k\cos{(2m-2k)g\sqrt{n}\tau}=(-1)^m
\end{equation*}
holds such that the left-hand side of (\ref{eqn:algebraic_even}) becomes:
\begin{align*}
    (-1)^m\sum_{k=0}^{m-1}\frac{2\,\binom{2m}{k}}{2^{2m}}&+(-1)^m\frac{\binom{2m}{m}}{2^{2m}}=(-1)^m=\pm 1
\end{align*}
Thus in case that $N$ is even, the flip duration is given by $\tau=\frac{\pi}{2g\sqrt{n}}$. 

We conclude that the time required to flip arbitrary number of spins equals $\frac{\pi}{2g\sqrt{n}}$. Since this result is derived based on the Tavis-Cummings charging Hamiltonian, it is valid under the assumption that the number of photons saturates the number of spins~\cite{Canming2017}.
\end{proof}

\subsection{Quantum speed up in the collective charging protocol}
Based on theorem 1, we now prove that from the aspect of energy transfer there is an enhancement for the charging efficiency in aforementioned charger-battery setup.

\begin{theorem}
If the time of flipping $N$ spins equals $\tau=\frac{\pi}{2g\sqrt{n}}$, then the average charging power for collective charging protocol is $\sqrt{N}$ times larger than for the corresponding parallel (individually charging) protocol.
\label{thm:speedup}
\end{theorem}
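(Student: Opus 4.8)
The plan is to evaluate the average charging power $P=W/\tau$ separately for the two protocols and compare them, taking the flip times directly from Theorem~\ref{thm:universal}. Before any computation I would fix the resource constraint that makes the comparison meaningful: I require both protocols to have the same total energy stored in the charger. Since every photon carries energy $\omega_c=\omega_a$, this amounts to fixing the total photon number at $n$. In the collective protocol all $n$ photons drive the $N$ spins in a single cavity, whereas in the parallel protocol the $N$ spins occupy $N$ separate cavities, so the photons are shared equally and each cavity holds $n/N$ of them. Pinning down this allocation is the conceptual core of the argument and the step I expect to require the most care, since the energy deposited in the battery being equal only constrains the energy lost by the chargers, not their initial resources.

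With resources fixed, the collective side is immediate: by Eq.~(\ref{eqn:energy}) the deposited energy is $W=N\omega_a$, and by Theorem~\ref{thm:universal} the flip time is $\tau_c=\frac{\pi}{2g\sqrt{n}}$, so $P_c=N\omega_a/\tau_c=2Ng\sqrt{n}\,\omega_a/\pi$. For the parallel protocol I would apply Theorem~\ref{thm:universal} in the single-spin case $N=1$ to one representative cavity. Each cavity holds $n/N$ photons, and since $N/n\ll1$ implies $1/(n/N)=N/n\ll1$, the hypothesis of the theorem is still met and the single-spin flip time is $\tau_p=\frac{\pi}{2g\sqrt{n/N}}=\sqrt{N}\,\tau_c$ (equivalently, the half Rabi period $\pi/\Omega$ of the resonant Jaynes--Cummings model with $\Omega=2g\sqrt{n/N}$). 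Because the $N$ spins flip simultaneously and each deposits $\omega_a$, the total energy is again $N\omega_a$, now transferred in time $\tau_p$, giving $P_p=N\omega_a/\tau_p=2g\sqrt{nN}\,\omega_a/\pi$.

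The claim then follows from
\begin{equation*}
    \frac{P_c}{P_p}=\frac{N\sqrt{n}}{\sqrt{nN}}=\frac{N}{\sqrt{N}}=\sqrt{N}.
\end{equation*}
The main obstacle is not the algebra but fixing the fair-comparison convention: the $\sqrt{N}$ factor is entirely inherited from the $\sqrt{n}$ scaling of the collective flip rate combined with the $n\to n/N$ dilution of each charger in the parallel case. I would therefore state the equal-total-energy assumption explicitly and note that a different convention, for instance giving each parallel cavity its own $n$ photons, would make $\tau_p=\tau_c$ and erase the advantage entirely.
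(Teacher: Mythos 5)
Your proof is correct and follows essentially the same route as the paper: both arguments rest on the equal-total-photon (equal charger energy) fair-comparison convention and extract the $\sqrt{N}$ ratio from the $\sqrt{n}$ scaling of the universal flip time in Theorem~\ref{thm:universal}. The only difference is a relabeling of resources --- the paper gives each parallel cavity $n$ photons and the collective cavity $Nn$, while you give each parallel cavity $n/N$ and the collective cavity $n$ --- which yields the identical ratio, and your explicit check that the single-spin case still satisfies $N/n\ll 1$ is a small point the paper leaves implicit.
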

\begin{figure}
    \centering
    \includegraphics[width=\columnwidth]{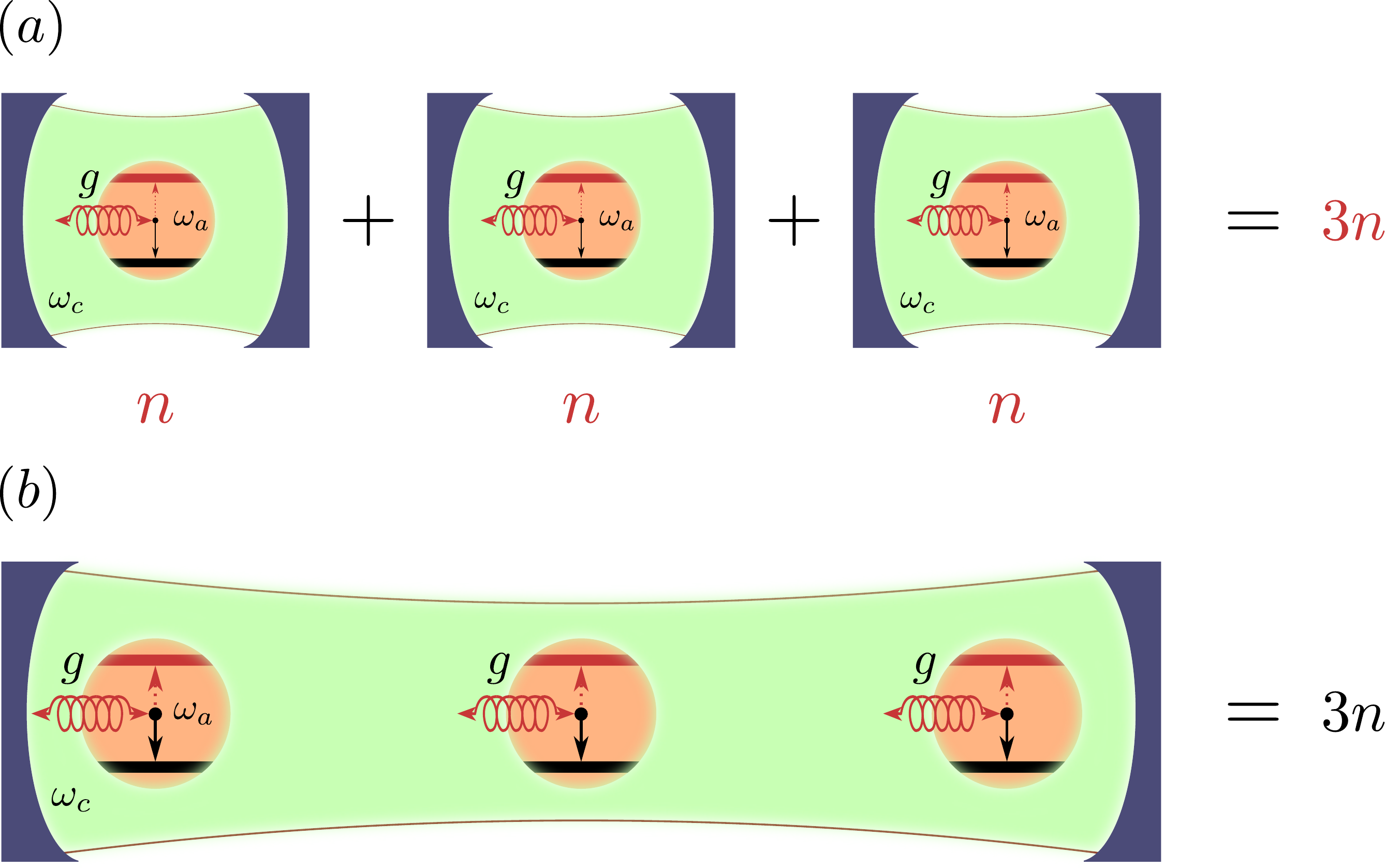}
    \caption{(Color online) (a) Individually charging three spins in parallel protocol. Each cavity is filled with $n$ photons and in total $3n$ number of photons are used. (b) The three spins are collectively charged inside a single cavity. To make a fair comparison the same amount of photons should be provided as in the parallel protocol.}
    \label{fig:indcol}
\end{figure}
\begin{proof}
Let us set $N=3$ as an example. The proof can be straightforwardly extended to arbitrary $N$ spins. First, it is easy to see that the time of flipping three spins in parallel is the same as the time required to flip each spin individually. That is, $\tau_\mathrm{par}=\frac{\pi}{2g\sqrt{n}}$ where $n$ denotes the number of driving photons in each cavity (as shown in figure~\ref{fig:indcol}(a)). Therefore, the total amount of photons for driving $3$ spins simultaneously should be summed to be $3n$. In order to make a fair comparison, it is important to make sure that the energy of the charging field in the collective protocol (shown in figure~\ref{fig:indcol}(b)) equals to the parallel one. That is, in this example the number of photons in collectively charging cavity should set to be $3n$ (instead of $n$). Thus the time required to flip three spins in a single cavity is given by $\tau_\mathrm{col}=\frac{\pi}{2g\sqrt{3n}}=\frac{1}{\sqrt{3}}\tau_\mathrm{par}$.

In both protocols, the energy transferred to the $3$-spin battery is the same [Eq.~(\ref{eqn:energy})]; only the time of charging collectively is $\sqrt{3}$ times faster than the parallel one. As a result, the average charging power for collective protocol is $\sqrt{3}$ times stronger than the corresponding parallel protocol. One can easily extend this result to a $N$-spin battery as long as the assumption $N\ll n$ is preserved.
\end{proof}

\section{Quantum speed up originating from coherent cooperative interactions}
\label{sec:reason}
In order to understand the origin of the speedup effect described in previous section, we first calculate the quantum speed limit (QSL) which forms the lower bound of the evolution duration that could possibly be achieved by corresponding Hamiltonian. For the parallel protocol, its energy variance reads $\Delta\tilde{H}_\mathrm{par}=N\cdot 2g\sqrt{n}$ and the number of charging photons is $nN$. According to Refs.~\onlinecite{Binder2015} and \onlinecite{Giovannetti2002}, we have the QSL as:
\begin{equation}
    \mathfrak{T}_\mathrm{par}=\frac{\pi}{2\Delta\tilde{H}_\mathrm{par}}=\frac{\pi}{4Ng\sqrt{n}}
\end{equation}
For collective charging, the energy variance for the same number of charging photons is $\Delta\tilde{H}_\mathrm{col}=N\cdot 2g\sqrt{nN}$, and the corresponding QSL is:
\begin{equation}
    \mathfrak{T}_\mathrm{col}=\frac{\pi}{2\Delta\tilde{H}_\mathrm{col}}=\frac{\pi}{4Ng\sqrt{nN}}=\frac{1}{\sqrt{N}}\mathfrak{T}_\mathrm{par}
\end{equation}
We see that due to the collective effect, the quantum speed limit has also been pushed down by a factor of $\frac{1}{\sqrt{N}}$. This analysis agrees with the discussion in Ref.~\onlinecite{Binder2016}, stating that the speed limit for parallel driving is $\sqrt{N}$ times larger than that for the collective one. Together with the previous result of $\tau_\mathrm{col}=\frac{1}{\sqrt{N}}\tau_\mathrm{par}$, we conjecture that the coherent cooperative effect among $N$ spins inside the cavity leads to a shortcut of duration by factor $\frac{1}{\sqrt{N}}$ for all time related phenomena.

By taking partial trace of the density matrix of the collectively charged spins, we can calculate and plot the development of entanglement during the evolution. As shown in Fig.~\ref{fig:flipping}, no entanglement develops during the flipping process. Upon detailed numerical inspection, one finds that all spins process exactly in step, i.e. follow the same evolution. This result differs from the conclusion of Ref.~\onlinecite{Ferraro2018}, which states that long-range entangling interactions among the spins will be formed due to the mediation of the common photon field inside the cavity. We have thus found an example where it is not the globally entangling operations that lead to the enhancement of charging. This raises the question what is the source of quantum speed-up in our model, if there is no entanglement involved.
\begin{figure}
    \centering
    \includegraphics[width=\columnwidth]{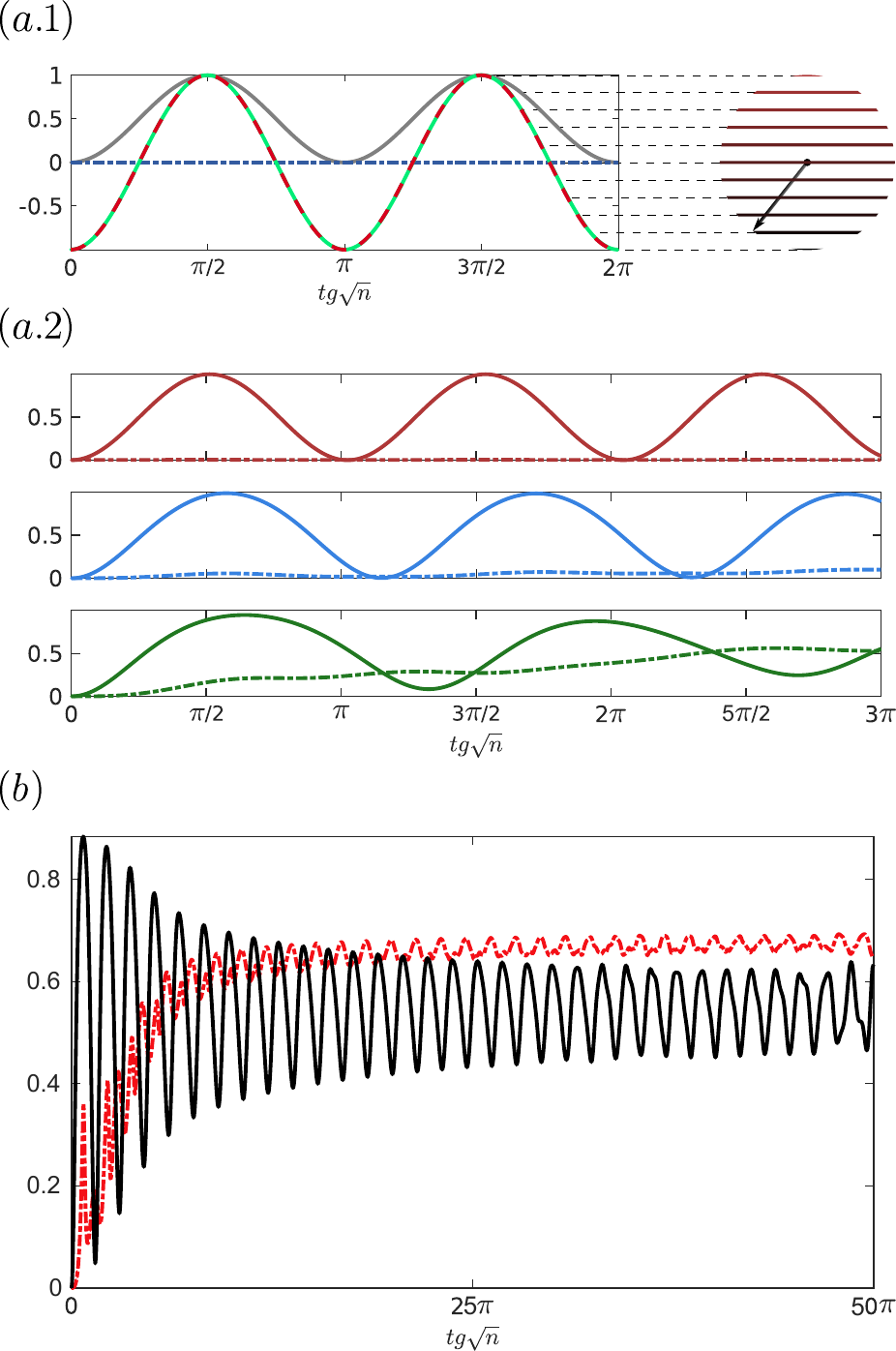}
    \caption{(Color online) (a.1) The evolution of $10$ spins with $10^4$ driving photons under the charging Hamiltonian~Eq.~(\ref{eqn:charging}). The gray curve stands for the energy deposited into each battery $\frac{W(t)}{N\omega_a}$ (in dimensionless units), which evolves periodically as the spins are consistently flipped from down to up and vice-versa. We take the Von Neumann entropy of the first spin as a measure for the amount of entanglement of the batteries. This indicator (blue dotted line) stays $0$ meaning that there is no entanglement being created during the process. The overlapping green and red curves depict the evolution of $\cos\theta_j$, with $\theta_j$ the angle between $\expval{\Vec{s}_j}$ and the $z^+$ axis for the $j$th spin (shown as the right sphere). Here the red curve stands for the first spin, i.e. $j=1$ and the green one represents for another randomly picked spin out of the batteries. This overlap means all spins in the cavity evolve exactly in step. (a.2) From top to bottom these three subplots indicate the charging of $10$ spins with $100$, $20$, and $12$ photons respectively. As before, solid curves refer to the deposited energy on each spin and the dotted line stands for the amount of entanglement. As the number of cavity photons shrinks, entanglement shows up and the spins cannot be fully charged. Such a result suggests that the requirement of $N/n\ll1$ can be reached as long as the photons outnumber the spins by an order of magnitude. (b) The supplementary plot for the evolution of $100$ spins driven by $90$ photons. As time progresses, dynamical equilibrium between the photons and spins will be formed as indicated in Ref.~\onlinecite{Soykal2010_prb}.}
    \label{fig:flipping}
\end{figure}

To answer this question we symbolically solve the von Neumann equation with Hamiltonian Eq.~(\ref{eqn:charging}):
\begin{equation}
\begin{split}
&\dv{a}{t}=i[H,a]=-i\omega_c a-igS_-\\
&\dv{\sigma_z^{(j)}}{t}=i[H,\sigma_z^{(j)}]=ig(a^\dagger\sigma_-^{(j)}-a\sigma_+^{(j)}).
\end{split}
\label{eqn:neumann}
\end{equation}
Here $j=1,2,\dots,N$ numbers the spins inside the cavity. Solving the dynamic equation for operator $a(t)$ one finds that:
\begin{equation}
\begin{split}
    &a(t)=-i g\int_{-\infty}^t e^{i\omega_c(t'-t)}S_-(t')dt'\\
    &a^\dagger(t)=i g\int_{-\infty}^t e^{-i\omega_c(t'-t)}S_+(t')dt'.
\end{split}
\label{eqn:dynamics_photon}
\end{equation}
Substituting these expressions back into Eq.~(\ref{eqn:neumann}) we obtain the equation of motion for a single spin:
\begin{equation}
\begin{split}
    &\dv{\sigma_z^{(j)}}{t}=-g^2\Bigl[\sigma_-^{(j)}(t)\int_{-\infty}^t e^{-i\omega_c(t'-t)}S_+(t')dt'\\
    &\hspace{5em}+\sigma_+^{(j)}(t)\int_{-\infty}^t e^{i\omega_c(t'-t)}S_-(t')dt'\Bigr].
\end{split}
\label{eqn:dynamics_spin}
\end{equation}
Eq.~(\ref{eqn:dynamics_spin}) shows that by integrating out the photon field, the effective force applied on an arbitrary spin $j$ is proportional to $\int_{-\infty}^t e^{i\omega_c(\tau-t)}S_\pm(\tau)d\tau$. We thus see that the interactions between the spins mediated by cavity photons act cooperatively, leading to an evenly distributed enhancement of the driving force on each of the battery spins. Since numerical calculation shows that all spins follow the same evolution in time, we can replace $S_\pm$ in Eq.~(\ref{eqn:dynamics_spin}) with $S_\pm=\sum_{j=1}^N \sigma_\pm^{(j)}=N\sigma_\pm$ leading to the spin dynamics:
\begin{equation}
\begin{split}
&\dv{\sigma_z}{t}=-Ng^2\Bigl[\sigma_-(t)\int_{-\infty}^t e^{-i\omega_c(t'-t)}\sigma_+(t')dt'\\
&\hspace{6em}+\sigma_+(t)\int_{-\infty}^t e^{i\omega_c(t'-t)}\sigma_-(t')dt'\Bigr].
\end{split}
\label{eqn:dynamics_collective}
\end{equation}
This final form explicitly shows that the coupling strength of each spin has been increased by $\sqrt{N}$ times. With this in mind, it is easy to understand that all the time related phenomena that depends linearly on the coupling strength would be accelerated by a factor of $\sqrt{N}$.

As argued by Binder~\emph{et al.} quantum speed up originates from two different sources. One is the reduction of path length between initial and final state in projected Hilbert space $\mathcal{L}(\ket{\psi_0},\ket{\psi_\tau})$ following the geodesic curve. Another is the enhanced driving energy felt by each local spin. Focussing on the fact that all spins follow the same evolution as if they are charged individually, we realize that the path length of evolution is the same for both protocols. This means that the collective protocol which does not create entanglement among spins also cannot drive them through the geodesic in projected Hilbert space~\cite{Carlini2006}. This explains that the speed up we observe in this paper is $\sqrt{N}$-fold instead of $N$-fold because the Tavis-Cumming Hamiltonian only increases the energy per spin but does not shorten the length of passage in the projected Hilbert space.

\section{Conclusion}
\label{sec:conclusion}
We have studied the energy transfer efficiency of an ideal Dicke quantum battery within the limit $N/n\ll 1$. Under the constraint of full charging, we predict a $\sqrt{N}$-fold boost of the charging power for a collective protocol compared to the parallel one. Using the matrix representation of the driving Hamiltonian we have analytically solved the eigenenergies and eigenstates for this charger-battery system. We find that the collective dynamics of spins mimics the process of swinging a pseudo-pendulum in quantized and finite-dimensional spin momentum space. We then apply these tools to the unitary evolution equation of the spins and demonstrate the existence of a universal flipping time for an arbitrary number of spins. Based on this, we show a boost of the averaged charging power for a collective charging protocol with evenly distributed driving forces.

Contrary to previous studies (see e.g. Refs.~\onlinecite{Binder2015,Ferraro2018}) which require multi-particle entanglement as the key part of quantum speed up, in our model there is no entanglement generated for the collective charging protocol. However, it is the coherent cooperative interactions inside the cavity that lead to increased coupling strength for each spin. Such effect results in the lowering of the quantum speed limit by a factor $\frac{1}{\sqrt{N}}$. In conclusion, for our Dicke battery the boost of charging power arises from the increased driving forces exerted on each spin, and not from a shortened path length in the projected Hilbert space.

Although the Dicke battery presented here only shows `half' the amount of speed up (factor $\sqrt{N}$ instead of $N$), it exhibits a scalable quantum advantage for faster energy transfer. Moreover, by cutting-edge development of cavity spintronics this battery setup is ready to be implemented in practice. For instance, nitrogen-vacancy (NV) centers are well suited for studying spin dynamics. NV spins can be optically implanted, initialized, and read out~\cite{Hanson2010}. As shown in Ref.~\onlinecite{Schroder2017}, the implantation of a few NV spins into the L3 photonic crystal cavity for coherent manipulations has already been realized. And the coupling of an ensemble of NV spins to a frequency tunable superconductor resonator has also been reported~\cite{kubo2010}.

Future work could focus on adding entangling interactions between the spins in order to further explore the remaining $\sqrt{N}$ factor of speedup. Another interesting line of research is to include spin decay and the injection of photons, which leads to a non-Hermitian quantum mechanical system~\cite{Bender2002}. By careful tuning such that the rate of injecting photons matches the decay rate of spins, this system allows for a transition into the regime of parity and time reversal symmetry, i.e. $\mathcal{PT}$ symmetry~\cite{Liu2017,Bender1998}. Within $\mathcal{PT}$ symmetry, the eigenvalues of the non-Hermitian Hamiltonian become real and several exceptional properties such as the increase of coupling strength and the decrease of the quantum speed limit would be expected~\cite{Dengke2017,Bender2007}.
\begin{acknowledgements}
We gratefully thank T. Yu and S. Dam for insightful discussions. X. Zhang acknowledge the financial support from the China Scholarship Council. This work is part of the research programme of the Foundation for Fundamental Research on Matter (FOM).
\end{acknowledgements}
%
% \appendix
% 
% References
\bibliographystyle{apsrev4-1}
\bibliography{battery}

\end{document}